
\typeout{KR2025 Instructions for Authors}


\documentclass{article}
\pdfpagewidth=8.5in
\pdfpageheight=11in

\usepackage{kr}

\usepackage{times}
\usepackage{soul}
\usepackage{url}
\usepackage[hidelinks]{hyperref}
\usepackage[utf8]{inputenc}
\usepackage[small]{caption}
\usepackage{graphicx}
\usepackage{amsmath}
\usepackage{amsthm}
\usepackage{booktabs}
\usepackage{algorithm}
\usepackage{algorithmic}
\urlstyle{same}
\usepackage{balance}

\usepackage{amssymb}
\usepackage{amsfonts}
\usepackage{dsfont}
\usepackage{cleveref}
\usepackage{stmaryrd}
\usepackage{scalefnt}
\usepackage{tikz}

\usepackage{mathtools}

\usepackage{relsize}
\usepackage{enumitem}
\usepackage{thmtools}
\usepackage{stackengine}


\newtheorem{example}{Example}

\newtheoremstyle{mystyle}%
{}{}%
{\itshape}%
{}%
{\bfseries}
{.}%
{.5em}%
{\indent\thmname{#1}\thmnumber{ #2}\thmnote{ (#3)}}

\declaretheorem[style=theorem]{definition}

\newcommand{\HyperSL}{HyperSL}
\newcommand{\HyperCTLS}{HyperCTL$^*$}
\newcommand{\SLI}{SL$_{\mathit{ii}}$}
\newcommand{\SL}{SL}

\newcommand{\bA}{{\vec{\alpha}}}
\newcommand{\bX}{{\vec{x}}}

\newcommand{\stratVars}{\mathcal{X}}

\DeclareMathOperator{\ltlN}{\normalfont\textsf{X}}
\DeclareMathOperator{\ltlG}{\normalfont\textsf{G}}
\DeclareMathOperator{\ltlF}{\normalfont\textsf{F}}

\DeclareMathOperator{\ltlU}{\normalfont\textsf{U}}
\DeclareMathOperator{\ltlW}{\normalfont\textsf{W}}

\newcommand{\ap}{\mathit{AP}}
\newcommand{\pathVars}{\mathcal{V}}

\newcommand{\ldot}{\mathpunct{.}}

\newcommand{\nat}{\mathbb{N}}

\newcommand{\calG}{\mathcal{G}}

\newcommand{\calO}{\mathcal{O}}

\newcommand{\obs}{\mathit{Obs}}

\newcommand{\agents}{\mathit{{Agts}}}
\newcommand{\moves}{\mathbb{A}}

\newcommand{\play}[0]{\mathit{Play}}
\newcommand{\strats}[1]{\mathit{Str}(#1)}

\newcommand{\clanglei}[2]{ #1 \textbf{@} #2 }
\newcommand{\cparai}[2]{ #1 \textbf{@}  #2 }

\makeatletter
\DeclareRobustCommand\bigop[1]{%
	\mathop{\vphantom{\sum}\mathpalette\bigop@{#1}}\slimits@
}
\newcommand{\bigop@}[2]{%
	\vcenter{%
		\sbox\z@{$#1\sum$}%
		\hbox{\resizebox{\ifx#1\displaystyle.9\fi\dimexpr\ht\z@+\dp\z@}{!}{$\m@th#2$}}%
	}%
}
\makeatother

\makeatletter
\newcommand{\superimpose}[2]{{%
		\ooalign{%
			\hfil$\m@th#1\@firstoftwo#2$\hfil\cr
			\hfil$\m@th#1\@secondoftwo#2$\hfil\cr
		}%
}}
\makeatother

\newcommand{\bigdot}{\DOTSB\bigop{\cdot}}

\newcommand\ScaleExists[1]{\vcenter{\hbox{\scalefont{#1}$\exists$}}}
\newcommand\ScaleForall[1]{\vcenter{\hbox{\scalefont{#1}$\forall$}}}

\DeclareMathOperator*\bigexists{%
	\vphantom\sum
	\mathchoice{\ScaleExists{1.7}}{\ScaleExists{1.4}}{\ScaleExists{1}}{\ScaleExists{0.75}}}

\DeclareMathOperator*\bigforall{%
	\vphantom\sum
	\mathchoice{\ScaleForall{1.7}}{\ScaleForall{1.4}}{\ScaleForall{1}}{\ScaleForall{0.75}}}

\definecolor{mydarkviolet}{HTML}{4A3078}
\newcommand{\agent}[1]{{\color{mydarkviolet}#1}}

\newcommand{\psiSLI}{{\psi}}
\newcommand{\varphiSLI}{{\varphi}}

\newcommand{\psiSL}{{\psi}}
\newcommand{\varphiSL}{{\varphi}}

\newcommand{\transi}[1]{\llbracket #1 \rrbracket}

\newcommand{\slToHyper}[1]{{\llparenthesis} #1 {\rrparenthesis}}

\newcommand{\bindi}[2]{(#1\!\blacktriangleright\!#2)}

\newif\iffullversion
\fullversiontrue

\newcommand{\ifFull}[2]{\iffullversion#1\else#2\fi}

\pdfinfo{
/TemplateVersion (KR.2022.0, KR.2023.0, KR.2024.0, KR.2025.0)
}

\title{Strategy Logic, Imperfect Information, and Hyperproperties}

\author{%
Raven Beutner\and Bernd Finkbeiner\\
\affiliations
CISPA Helmholtz Center for Information Security, Germany
}

\begin{document}

\maketitle

\begin{abstract}
  Strategy logic (SL) is a powerful temporal logic that enables first-class reasoning over strategic behavior in multi-agent systems (MAS).
  In many MASs, the agents (and their strategies) cannot observe the global state of the system, leading to many extensions of SL centered around imperfect information, such as \emph{strategy logic with imperfect information} (SL$_\mathit{ii}$).
  Along orthogonal lines, researchers have studied the combination of strategic behavior and hyperproperties. 
  Hyperproperties are system properties that relate multiple executions in a system and commonly arise when specifying security policies. 
 Hyper Strategy Logic (HyperSL) is a temporal logic that combines quantification over strategies with the ability to express hyperproperties on the executions of different strategy profiles. 
	In this paper, we study the relation between SL$_\mathit{ii}$ and HyperSL.
  Our main result is that both logics (restricted to formulas where no state formulas are nested within path formulas) are \emph{equivalent} in the sense that we can encode SL$_\mathit{ii}$ instances into HyperSL instances and vice versa. 
  For the former direction, we build on the well-known observation that imperfect information \emph{is} a hyperproperty. 
  For the latter direction, we construct a self-composition of MASs and show how we can simulate hyperproperties using imperfect information. 
\end{abstract}

\section{Introduction}

Multi-agent systems (MAS) are ubiquitous in our everyday lives, necessitating the need for formal guarantees on their behavior. 
In MASs, we typically reason about the ability of groups of agents, which requires reasoning about \emph{strategies}.
This led to the development of powerful temporal logics like ATL/ATL$^*$ \cite{AlurHK02} and Strategy Logic (SL) \cite{ChatterjeeHP10,MogaveroMPV14}. 
While the former can implicitly reason about strategic ability (e.g., a group of agents has \emph{some} strategy to enforce a certain goal), the latter features \emph{explicit} quantification over strategy, allowing the same strategy to be used in multiple contexts, which is critical to express important properties like Nash equilibria.

\paragraph{Strategy Logic With Imperfect Information}

In plain SL, the strategies of agents can observe the entire state of the MAS \cite{MogaveroMPV14}.
In most models of real-world situations, this is unrealistic, i.e., an agent typically acts on some local sensing ability and must thus act under incomplete information. 
This observation led to many logics that can reason about strategic behavior under imperfect information \cite{abs-1908-02488,BelardinelliLMR17,BerthonMM17}. 
A particularly powerful logic among these is SL with imperfect information (\SLI{}) \cite{BerthonMMRV17}, which extends SL with the ability to quantify over strategies with a given observation model.
For example, 
\begin{align*}
	\exists x^o. \forall y^{o'}. \bindi{\agent{1}}{x} \bindi{\agent{2}}{y} \bindi{\agent{3}}{y} \ltlG \ltlF \mathit{goal}
\end{align*}
expresses that there exists some strategy with observation model $o$ (formally, $o$ is associated with an indistinguishability relation on states of the MAS), such that for every strategy $y$ under observation $o'$, the play where agent $\agent{1}$ plays strategy $x$, and agents $\agent{2}$ and $\agent{3}$ play $y$ (i.e., the play under strategy profile $(\agent{1} \mapsto x, \agent{2} \mapsto y, \agent{3} \mapsto y)$) satisfies $ \ltlG \ltlF \mathit{goal}$.

\paragraph{Hyper Strategy Logic}

Along orthogonal lines, SL has been extended with the concept of \emph{hyperproperties} \cite{ClarksonS08}, i.e., properties that relate multiple paths in a system. 
In plain SL, we can use the same strategy in different situations, but each strategy profile is evaluated against an LTL formula.
As a result, we can only express properties on individual strategy profiles and then reason about Boolean combinations of these properties. 
For many properties (e.g., security and robustness policies), we need to compare \emph{multiple} executions to, e.g., see how different high-security inputs impact the low-security observations of a system.
HyperSL \cite{BeutnerF24SL}, extends SL with the concept of path variables (similar to logics like \HyperCTLS{} \cite{ClarksonFKMRS14}).
For example, 
\begin{align*}
	\exists x. \forall y. \forall z. \big((\neg \mathit{g}_{\pi_2}) \ltlU \mathit{g}_{\pi_1} \big) \left[\begin{aligned}
		\pi_1 : (\agent{1} \mapsto x, \agent{2} \mapsto x, \agent{3} \mapsto z)\\
		\pi_2 : (\agent{1} \mapsto x, \agent{2} \mapsto y, \agent{3} \mapsto z)
	\end{aligned}\right]
\end{align*}
states that there exists some strategy $x$ (under full information), such that for all $y, z$, the strategy profile $(\agent{1} \mapsto x, \agent{2} \mapsto x, \agent{3} \mapsto z)$ reaches a goal $\mathit{g}$ at least as fast as profile $(\agent{1} \mapsto x, \agent{2} \mapsto y, \agent{3} \mapsto z)$. 
To express this, we construct \emph{two} paths $\pi_1, \pi_2$ using different strategy profiles, and, within the LTL body, can refer to atomic propositions on both paths. 

\paragraph{The Connection}

At first glance, \SLI{} and \HyperSL{} appear orthogonal.
The former is centered around the information of agents, whereas the latter is targeted at quantitative properties and security policies. 
However, in this paper, we show that both logics (restricted to formulas where no state formulas are nested within path formulas) are equally expressive in the sense that we can translate model-checking instances between both. 
This connection allows for a unified study of (imperfect) knowledge and hyperproperties, and potentially allows the transfer of decidability results between \SLI{} and \HyperSL{} (model-checking is undecidable for both logics). 
To translate \SLI{} into \HyperSL{} (\Cref{sec:enc1}), we build on the well-known observation that a strategy acting under imperfect knowledge \emph{is} a hyperproperty, i.e., the strategy should pick the same action on all \emph{pairs} of executions that appear indistinguishable. 
For the latter direction (\Cref{sec:hypersl_in_sli}), we simulate hyperproperties within \SLI{}. 
Our key observation here is that we can construct the self-composition of an MAS and, using imperfect information, simulate multiple executions within this composition.  

\section{Preliminaries}\label{sec:prelim}
We let $\ap$ be a fixed finite set of atomic propositions and fix a finite set of agents $\agents = \{\agent{1}, \ldots, \agent{n}\}$.

\paragraph{Concurrent Game Structures}

As the underlying model of MASs, we use concurrent game structures (CGS) \cite{AlurHK02}.
A CGS is a tuple $\calG = (S, s_0, \moves, \kappa, L)$ where $S$ is a finite set of states, $s_0 \in S$ is an initial state, $\moves$ is a finite set of actions, $\kappa : S \times (\agents \to \moves)\to S$ is a transition function, and $L : S \to 2^\ap$ is a labeling function.
The transition function takes a state $s$ and an action profile $\vec{\alpha} : \agents \to \moves$ and returns a unique successor state $\kappa(s, \vec{\alpha})$.
A strategy in $\calG$ is a function $f : S^+ \to \moves$, mapping finite plays to actions.
We denote the set of all strategies in $\calG$ with $\strats{\calG}$.
Once we fix a strategy for each agent, we obtain a unique path in the CGS (cf.~\ifFull{the appendix}{the full version \cite{fullVersion}}).

\paragraph{\SLI}

\SLI{} \cite{BerthonMMRV17} extends plain \SL{} by quantifying over strategies with a given observation model. 
Let $\calG = (S, s_0, \moves, \kappa, L)$ be a fixed CGS, and let $\obs$ be a fixed finite set of so-called observations.
An observation family $\{\sim_o\}_{o \in \obs}$ associates each $o \in \obs$ with an equivalence relation $\sim_o \subseteq S \times S$.
For a strategy with observation $o$, two states $s \sim_o s'$ appear identical. 
This naturally extends to finite plays:
Two finite plays $p, p' \in S^+$ are $o$-indistinguishable, written $p \sim_o p'$, if $|p| = |p'|$ and for each $0 \leq i < |p|$, $p(i) \sim_o p'(i)$.
An \emph{$o$-strategy} is a function $f : S^+ \to \moves$ that cannot distinguish between $o$-indistinguishable plays, i.e., for all $p, p' \in S^+$ with $p \sim_o p'$ we have $f(p) = f(p')$. 
We denote with $\strats{\calG, o}$ the set of all $o$-strategies in $\calG$. 
Now, assume that $\stratVars = \{x, y, \ldots\}$ is a set of \emph{strategy variables}. 
We consider \SLI{} formulas that are generated by the following grammar:
\begin{align*}
	\psiSLI &:= a \mid \neg \psiSLI \mid \psiSLI \land \psiSLI \mid \ltlN \psiSLI \mid \psiSLI \ltlU \psiSLI \\
	\varphiSLI &:= \psiSLI \mid \varphiSLI \land \varphiSLI \mid \varphiSLI \lor \varphiSLI \mid \forall x^o\ldot \varphiSLI \mid \exists x^o\ldot \varphiSLI \mid \bindi{\agent{i}}{x} \varphiSLI
\end{align*}
where $a \in \ap$, $x \in \stratVars$, $\agent{i} \in \agents$, and $o \in \obs$ is an observation.
We use the usual Boolean connectives $\lor, \to, \leftrightarrow$, and constants $\top, \bot$, as well as the derived LTL operators \emph{eventually} $\ltlF \psiSLI$, \emph{globally} $\ltlG \psiSLI$, and \emph{weak until} $\psiSLI_1 \ltlW \psiSLI_2$.
Note that we do not allow nested state formulas within path formulas.
In \SLI{}, quantification ($\forall x^o$ and $\exists x^o$) ranges over a strategy with fixed observation $o$. 
The agent binding construct $\bindi{\agent{i}}{x} \varphiSLI$, then evaluates $\varphiSLI$ after binding agent $\agent{i}$ to some previously quantified strategy $x$ (cf.~\cite{MogaveroMPV14}).
The semantics of \SLI{} is defined as expected: We maintain a partial mapping $\Delta : \stratVars \to \strats{\calG}$ that maps strategy variables to strategies, and a mapping $\Theta : \agents \to  \strats{\calG}$ mapping agents to strategies. 
Whenever we evaluate $\forall x^o$ or $\exists x^o$, we quantify over a strategy in $\strats{\calG, o}$ and add it to $\Delta$; when evaluating an agent binding $\bindi{\agent{i}}{x}$, we update $\Theta$ by mapping agent $\agent{i}$ to strategy $\Delta(x)$.
Once we reach a path formula $\psi$, we check if the path resulting from the strategy profile $\Theta$ satisfies the LTL formula $\psi$. 
We give the full semantics in the \ifFull{appendix}{full version \cite{fullVersion}}.  
We write $(\calG, \{\sim_o\}_{o \in \obs}) \models_{\text{\SLI}} \varphiSLI$ if $\varphiSLI$ holds in $\calG, \{\sim_o\}_{o \in \obs}$.

\paragraph{Hyper Strategy Logic}

\HyperSL{} \cite{BeutnerF24SL} is centered around the idea of combining strategic reasoning (as possible in SL) with the ability to express hyperproperties (as possible in logics such as 
\HyperCTLS{} \cite{ClarksonFKMRS14}).
In addition to the strategy variables $\stratVars$, we assume that $\pathVars = \{\pi, \pi_1, \ldots\}$ is a set of \emph{path variables}.
Path and state formulas in \HyperSL{} are generated by the following grammar:
\begin{align*}
	\psi &:= a_\pi \mid \neg \psi \mid \psi \land \psi \mid \ltlN \psi \mid \psi \ltlU \psi \\
	\varphi &:= \forall x\ldot \varphi \mid \exists x\ldot \varphi \mid \varphiSLI \land \varphiSLI \mid \varphiSLI  \lor \varphiSLI \mid  \psi\big[\pi_k : \bX_k \big]_{k=1}^m
\end{align*}
where $a \in \ap$, $\pi, \pi_1, \ldots, \pi_m \in \pathVars$ are path variables, $x \in \stratVars$, and $\bX_1, \ldots, \bX_m : \agents \to \stratVars$ are strategy profiles that assign a strategy variable to each agent.

In \HyperSL{}, we can quantify over strategies (under full information), and evaluate LTL formulas on \emph{multiple} paths at the same time (thus expressing hyperproperties). 
Formally, $\psi[\pi_k : \bX_k ]_{k=1}^m$ expresses a hyperproperty over $m$ paths in the CGS, where the $k$th path (bound to $\pi_k$) is the unique play where each agent $\agent{i}$ plays strategy $\bX_k(\agent{i})$. 
In the LTL formula $\psi$, we then use path-variable-indexed atomic propositions: The formula $a_\pi$ holds iff AP $a$ holds on the path bound to path variable $\pi$.
\HyperSL{} thus allows us to express \emph{temporal} properties on multiple strategy profiles at the same time. 
We refer the reader to \cite{BeutnerF24SL} for details. 
The semantics of \HyperSL{} is defined as expected:
Similar to \SLI{}, we collect all strategies in a mapping $\Delta : \stratVars \to \strats{\calG}$. 
When evaluating $\psi[\pi_k : \bX_k ]_{k=1}^m$, we then define $\pi_k$ to be the unique path where each agent $\agent{i} \in \agents$ plays strategy $\Delta(\bX_k(\agent{i}))$, and evaluate the LTL formula on the resulting $m$ paths $\pi_1, \ldots, \pi_m$. 
We give a full semantics in the \ifFull{appendix}{full version \cite{fullVersion}}. 
We write $\calG \models \varphi$ if $\calG$ satisfies $\varphi$.

\section{Encoding \SLI{} Into \HyperSL}\label{sec:enc1}

Note that the underlying game structure differs between both logics (\HyperSL{} is defined on plain CGSs and \SLI{} on CGSs with an observation model), so our encodings have to modify both the formula and the underlying system. 

In this section, we show that we can encode \SLI{} into \HyperSL{} by recalling the encoding from \cite{BeutnerF24SL}.
Our encoding is based on the well-known observation that acting under imperfect information \emph{is} a hyperproperty \cite{BozzelliMP15}.

\paragraph{Injective Labeling and Action Recording}

Strategies are defined as functions $S^+ \to \moves$, and $\sim_o$ is defined as a relation on states, i.e., both are defined directly on components of the game structure.
In contrast, \emph{within} our logic, we only observe the evaluation of the atomic propositions.
In the first step, we thus modify the game structure to provide sufficient information within its atomic propositions.

\begin{restatable}{definition}{ilar}
	A CGS $\calG = (S, s_0, \moves, \kappa, L)$ is \emph{injectively labeled} (IL) if $L : S \to 2^\ap$ is injective.
	The  CGS is \emph{action recording} (AR) if for each agent $\agent{i} \in \agents$ and every action $\alpha \in \moves$,  there exists an AP $\langle \agent{i}, \alpha \rangle \in \ap$ that holds in a state iff agent $\agent{i}$ played action $\alpha$ in the last step.
\end{restatable}

We can easily modify any CGS to be IL and AR (note that \SLI{} model-checking is undecidable):

\begin{restatable}{lemma}{makeILAR}\label{lem:sli_il_ar}
	Given an \SLI{} MC instance $(\calG, \{\sim_o\}_{o \in \obs}, \varphiSLI)$ there exists an effectively computable \SLI{} instance $(\calG', \{\sim'_o\}_{o \in \obs}, \varphiSLI')$ where \textbf{(1)} $\calG'$ is IL and AR, and \textbf{(2)} $(\calG, \{\sim_o\}_{o \in \obs}) \models_{\text{\SLI{}}} \varphiSLI$ iff $(\calG', \{\sim'_o\}_{o \in \obs}) \models_{\text{\SLI{}}} \varphiSLI'$.
\end{restatable}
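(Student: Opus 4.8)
The plan is to build $\calG'$ by recording, in each state, the action profile that was last played, and to add fresh propositions that uniquely identify states; crucially, the observation relations will be defined so that strategies stay blind to the recorded actions. Concretely, I would take $S' = S \times \big((\agents \to \moves) \cup \{\iota\}\big)$, where a state $(s, \vec{\alpha})$ records the current state $s$ together with the profile $\vec{\alpha}$ played to reach it, and $\iota$ is a fresh marker used only for the initial state $s_0' := (s_0, \iota)$ (no action precedes it). The transition function forgets the stored profile and stores the new one, $\kappa'\big((s,\cdot), \vec{\beta}\big) := \big(\kappa(s,\vec{\beta}), \vec{\beta}\big)$. For the labeling I would keep all original propositions and add, for every $s \in S$, a fresh proposition $p_s$, together with the action-recording propositions, setting $L'\big((s,\vec{\alpha})\big) := L(s) \cup \{p_s\} \cup \{\langle \agent{i}, \vec{\alpha}(\agent{i})\rangle \mid \agent{i}\in\agents\}$ for $\vec{\alpha} \ne \iota$, and $L'\big((s_0,\iota)\big) := L(s_0)\cup\{p_{s_0}\}$. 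Then AR holds by construction, and IL holds because $p_s$ separates states with different first component while the action propositions separate states with the same first component but different recorded profile. The whole construction is finite ($|S'| = |S|\cdot(|\moves|^{|\agents|}+1)$) and effective. I set $\varphiSLI' := \varphiSLI$, which is legal since all original propositions survive in $\calG'$.

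The key definition is the observation family: I would set $(s,\vec{\alpha}) \sim'_o (s',\vec{\alpha}')$ iff $s \sim_o s'$, i.e.\ the recorded profile is invisible to every observation. Each $\sim'_o$ is then an equivalence relation, and, most importantly, any two plays of $\calG'$ with the same underlying state sequence are $o$-indistinguishable for every $o$. Consequently every $o$-strategy of $\calG'$ must assign the same action to plays that project to the same state sequence, so it factors through the projection $\mathrm{proj} : (S')^+ \to S^+$ that drops the recorded profiles. This yields the central step of the proof: projection and lifting along $\mathrm{proj}$ form a bijection between $\strats{\calG,o}$ and $\strats{\calG',o}$, for each $o \in \obs$.

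I would then finish by checking that this bijection commutes with play generation: under corresponding strategy profiles, the play of $\calG'$ from $(s_0,\iota)$ projects exactly to the play of $\calG$ from $s_0$, and the two agree on all original propositions position by position. A routine induction on the structure of $\varphiSLI$ — using the strategy bijection for the quantifiers $\exists x^o$ and $\forall x^o$, the obvious correspondence for the bindings $\bindi{\agent{i}}{x}$, and the agreement on original propositions for the path formulas $\psiSLI$ — then gives $(\calG,\{\sim_o\}_{o\in\obs}) \models_{\text{\SLI}} \varphiSLI$ iff $(\calG',\{\sim'_o\}_{o\in\obs})\models_{\text{\SLI}}\varphiSLI$.

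The one genuinely delicate point is the definition of $\sim'_o$. Enriching the states with the last action profile lengthens the information carried by a history, and if the recorded actions were left visible (for instance by using the identity relation on $S'$), quantified strategies in $\calG'$ could branch on past actions that are invisible in $\calG$; since a strategy in $\calG$ is a function of the state history alone, this would in general add witnesses for existential quantifiers and challenges for universal ones, and thus change the truth value of $\varphiSLI$. Making the recorded profile unobservable is exactly what keeps the two strategy spaces in bijection, and I expect this to be the crux of the argument, with the remaining bookkeeping (equivalence-relation checks, surjectivity of $\mathrm{proj}$, and the inductive play correspondence) being routine.
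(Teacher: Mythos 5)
Your proposal is correct and follows essentially the same route as the paper: record the last action profile in a second state component, make that component invisible to every $\sim'_o$, and add fresh state-identifying propositions (unused in the formula) to obtain injective labeling, keeping $\varphiSLI' = \varphiSLI$. You spell out the strategy bijection and the induction that the paper leaves as "easy to see," and your $\iota$-marker for the initial state is a slightly cleaner treatment of the fact that no action precedes it, but the construction is the same.
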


Now assume some fixed CGS $\calG = (S, s_0, \moves, \kappa, L)$, observation family $\{\sim_o\}_{o \in \obs}$, and \SLI{} formula $\varphiSLI$.
Using \Cref{lem:sli_il_ar}, we assume, w.l.o.g., that $\calG$ is IL and AR.

\paragraph{Enforcing Imperfect Information}

First, we construct a formula that identifies pairs of states that are indistinguishable according to $\sim_o$.
For $o \in \obs$, we define the \HyperSL{} path formula $\mathit{ind}_o$ over path variables $\pi_1, \pi_2$ as follows:
\begin{align*}
	\mathit{ind}_o := \!\!\! \bigvee_{(s, s') \in \sim_o}\!\Big(\!&\bigwedge_{a \in L(s)} a_{\pi_1} \land \!\!\! \bigwedge_{a \in \ap\setminus L(s)} \!\!\! \neg a_{\pi_1} \; \land \\
	&\quad\bigwedge_{a \in L(s')} a_{\pi_2} \land \!\! \bigwedge_{a \in \ap\setminus L(s')} \!\! \neg a_{\pi_2}\Big)
\end{align*}
It is easy to see that on any injectively labeled game structure $\mathit{ind}_o$ holds if the two paths bound to $\pi_1, \pi_2$ are $\sim_o$-related in their first state.
Given an observation $o \in \obs$, and strategy variable $x \in \stratVars$, we define a formula $\mathit{ii}_o(x)$ that holds on a strategy iff this strategy is an $o$-strategy (in all reachable situations and for all agents) as follows:
\begin{align*}
	\mathit{ii}_o(x) &:= \forall y_1, \ldots, y_n, y'_1, \ldots, y'_n. \span\span \\
	&\quad\bigwedge_{\agent{i} = 1}^n \psiSLI^{\agent{i}}_o \left[\begin{matrix}
		\pi_1 : (y_1, \ldots, y_{i-1}, x, y_{i+1}, \ldots, y_n )\\
		\pi_2 : (y'_1, \ldots, y'_{i-1}, x, y'_{i+1}, \ldots, y'_n ),
	\end{matrix}\right]
\end{align*}
where
\begin{align*}
	\psi_o^{\agent{i}} := \Big(\ltlN \bigwedge_{\alpha \in \moves} &\langle  \agent{i}, \alpha\rangle_{\pi_1} \leftrightarrow \langle \agent{i}, \alpha \rangle_{\pi_2}  \Big) \ltlW \big( \neg \mathit{ind}_o \big).
\end{align*}
The path formula $\psiSLI_o^{\agent{i}}$ compares two paths $\pi_1, \pi_2$ and states that as long as prefixes of those two paths are $o$-indistinguishable (i.e., $\mathit{ind}_o$ holds in each step), the action selected by agent $\agent{i}$ is the same on both prefixes (using the fact that the structure records actions). 
As we do not know which agents might end up playing strategy $x$, we assert that $x$ behaves as an $o$-strategy for all agents.
For each $\agent{i} \in \agents$, we thus compare two paths where $\agent{i}$ plays $x$, but all other agents play some arbitrary strategy, and assert that $\psiSLI_o^{\agent{i}}$ holds for those two paths. 
Strategy $x$ must thus respond with the same action on any two $o$-indistinguishable prefixes; in all reachable situations for all agents.

\paragraph{The Translation}

Using $\mathit{ii}_o(x)$ as a building block, we can translate \SLI{} into \HyperSL{}.
Let $\dot{\pi} \in \pathVars$ denote some \emph{fixed} path variable.
\SLI{} path formulas are then translated directly into \HyperSL{} path formulas by indexing APs with $\dot{\pi}$:
\begin{align*}
	\slToHyper{a} &:= a_{\dot{\pi}} \\
	  \slToHyper{\neg \psiSL} &:= \neg \slToHyper{\psiSL} \\
	   \slToHyper{\psiSL_1 \land \psiSL_2} &:= \slToHyper{\psiSL_1} \land \slToHyper{\psiSL_2}\\
	 \slToHyper{\ltlN \psiSL} &:= \ltlN \slToHyper{\psiSL} \\
	   \slToHyper{\psiSL_1 \ltlU \psiSL_2} &:= \slToHyper{\psiSL_1} \ltlU \slToHyper{\psiSL_2}
\end{align*}
To translate state formulas, we track the current agent binding -- which \SLI{} formalizes via agent bindings of the form $\bindi{\agent{i}}{x}$ -- using a partial strategy profile $\bX : \agents \rightharpoonup \stratVars$ (we write $\emptyset$ for the empty profile). 
We can then use $\bX$ to construct the unique path $\dot{\pi}$ whenever we encounter a path formula:
\begin{align*}
	\slToHyper{\varphiSL_1 \land \varphiSL_2}^\bX &\!:=\! \slToHyper{\varphiSL_1}^\bX \! \land \slToHyper{\varphiSL_2}^\bX\\ \slToHyper{\varphiSL_1 \lor \varphiSL_2}^\bX &\!:=\! \slToHyper{\varphiSL_1}^\bX \lor \slToHyper{\varphiSL_2}^\bX\\
	\slToHyper{\psiSL}^\bX &:= \slToHyper{\psiSL}[\dot{\pi} : \bX] \\
	\slToHyper{\bindi{\agent{i}}{x} \varphiSL}^\bX &:= \slToHyper{\varphiSL}^{\bX[\agent{i} \mapsto x]}\\
	\slToHyper{\forall x^o. \varphiSLI}^\bX &:= \forall x\ldot \mathit{ii}_o(x) \rightarrow \slToHyper{\varphiSLI}^\bX \\
	\slToHyper{\exists x^o. \varphiSLI}^\bX &:= \exists x \ldot  \mathit{ii}_o(x)  \land \slToHyper{\varphiSLI}^\bX
\end{align*}
Note that \HyperSL{} does not allow implications between state formulas, but we can convert $\varphi_1 \to \varphi_2$ to $\neg \varphi_1 \lor \varphi_2$, and push the negation into the path formulas. 

\begin{restatable}{theorem}{sliToHyper}
	For any \SLI{} instance $\big((\calG, \{\sim_o\}_{o \in \obs}), \varphi\big)$ where $\calG$ is IL and AR (cf.~\Cref{lem:sli_il_ar}), we have $(\calG, \{\sim_o\}_{o \in \obs}) \models_{\text{SL}_{\text{ii}}} \varphi$ iff $\calG \models \slToHyper{\varphiSLI}^\emptyset$.
\end{restatable}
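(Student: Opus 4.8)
The plan is to prove the equivalence by structural induction on $\varphiSLI$, after first isolating the correctness of the building block $\mathit{ii}_o(x)$ as a standalone lemma. Because the two logics interpret their strategy assignments differently -- SL$_{ii}$ quantifies over $o$-strategies while \HyperSL{} quantifies over full-information strategies -- I would generalize the statement to carry a strategy assignment $\Delta$ and the partial profile $\bX$ explicitly, proving that for all $\Delta$ and $\bX$ compatible with the observation annotations, the SL$_{ii}$ judgment under $\Delta$ and the induced agent mapping $\Theta$ (with $\Theta(\agent{i}) = \Delta(\bX(\agent{i}))$) holds iff $\calG, \Delta \models \slToHyper{\varphiSLI}^\bX$. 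The theorem is then the case $\bX = \emptyset$.

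The central lemma I would establish first is that, on an IL and AR game structure, $\calG, \Delta \models \mathit{ii}_o(x)$ holds iff $\Delta(x)$ selects the same action on every pair of reachable, $o$-indistinguishable finite plays. For the easy direction, given an $o$-consistent $\Delta(x)$ and arbitrary choices $y_1, \ldots, y_n, y'_1, \ldots, y'_n$ for the remaining agents, the two induced paths remain $o$-indistinguishable exactly as long as $\mathit{ind}_o$ holds -- here injective labeling guarantees that $\mathit{ind}_o$ faithfully detects $\sim_o$ at each position -- and along any such common prefix agent $\agent{i}$ plays $\Delta(x)$ on both, so by $o$-consistency the actions agree, witnessed through the AR propositions $\langle \agent{i}, \alpha \rangle$; the weak until correctly covers the case where the prefixes stay indistinguishable forever. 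For the converse, if $\Delta(x)(p) \neq \Delta(x)(p')$ for some reachable $p \sim_o p'$, I would pick strategies for the other agents that drive $\pi_1$ along $p$ and $\pi_2$ along $p'$ (possible since both are reachable), keeping the two prefixes $o$-indistinguishable up to their last states yet forcing agent $\agent{i}$ to play differing actions in the next step, which refutes $\psiSLI^{\agent{i}}_o$.

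With the lemma in hand, the induction is routine. For a path formula $\psiSLI$, a sub-induction on its LTL structure shows that $\slToHyper{\psiSLI}[\dot{\pi} : \bX]$ evaluated on the single path induced by $\bX$ under $\Delta$ agrees with $\psiSLI$ evaluated on the SL$_{ii}$ play under $\Theta$, since $\dot{\pi}$ is precisely that play and the indexing $a \mapsto a_{\dot{\pi}}$ preserves the truth of atomic propositions. The Boolean and binding cases follow directly from the induction hypothesis, with the binding case updating $\bX$ to $\bX[\agent{i} \mapsto x]$ in lockstep with the update of $\Theta$, and the remark about encoding $\varphi_1 \to \varphi_2$ via negation handles the absence of state-level implication in \HyperSL{}. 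In the quantifier cases, the lemma lets me match each SL$_{ii}$ choice of an $o$-strategy with a \HyperSL{} choice of a strategy satisfying $\mathit{ii}_o(x)$, and conversely.

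The main obstacle is the quantifier step combined with the gap between the two notions of strategy: the SL$_{ii}$ definition of an $o$-strategy constrains behavior on all of $S^+$, whereas $\mathit{ii}_o(x)$ only constrains reachable plays. I must argue this discrepancy is harmless by observing that the play induced by any profile depends only on the restriction of the strategies to reachable prefixes, so every $o$-strategy satisfies $\mathit{ii}_o(x)$, and conversely any \HyperSL{} witness satisfying $\mathit{ii}_o(x)$ can be modified on unreachable plays into a genuine $o$-strategy that induces the same plays. Care is also needed in the converse direction of the lemma to confirm that the witnessing strategies for the other agents genuinely keep the two prefixes $o$-indistinguishable up to the point of divergence.
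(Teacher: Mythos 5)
Your proposal is correct and follows essentially the same route as the paper, which only sketches the argument via the key observation that a strategy satisfies $\mathit{ii}_o$ in a state iff it acts as an $o$-strategy in all reachable situations, and that this suffices because strategies are only ever queried on plays compatible with themselves. Your elaboration (the standalone lemma for $\mathit{ii}_o(x)$, the generalized induction hypothesis carrying $\Delta$ and $\bX$, and the explicit repair of a $\mathit{ii}_o$-witness into a genuine $o$-strategy on unreachable prefixes) fills in exactly the details the paper leaves implicit.
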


To see the above, observe that for any strategy $f \in \strats{\calG}$ and state $s \in S$, $f$ satisfies $\mathit{ii}_o$ in state $s$ iff $f$ is an $o$-strategy in all \emph{reachable situations} starting from $s$.
Phrased differently, for any strategy $f$ that satisfies $\mathit{ii}_o$ there exists some proper $o$-strategy $f' \in \strats{\calG, o}$ that agrees with $f$ in all reachable situations. 
As any strategy will only be queried on plays that are compatible with the strategy itself, this suffices to encode the \SLI{} semantics. 

\paragraph{Encoding Size}

We can analyze the size of our formula encoding.
Formula $\mathit{ind}_o$ is of size $\calO\big(  |\ap|  \cdot |\!\!\sim_o\!\!|  \big)$, and $\mathit{ii}_o$ is of size $\calO \big( |\agents| \cdot (|\moves| + |\mathit{ind}_o| )\big)$.
The translation for path formulas is linear, and for each quantifier we add one instance of $\mathit{ii}_o$, so the size of $\slToHyper{\varphiSLI}^\emptyset$ is bounded by
\begin{align*}
	\calO \Big( |\varphi| \cdot |\agents| \cdot \big(|\moves| + |\ap| \cdot \max_{o \in \mathit{Obs}} |\!\sim_o\!| \big) \Big).
\end{align*}
Note that this assumes the worst case, where $\mathit{ind}_o$ enumerates all pairs in $\sim_o$. 
In most cases, we can identify indistinguishable states easily (by, e.g., looking at a particular AP), reducing the size of $\mathit{ind}_o$ significantly.

\section{Encoding \HyperSL{} Into \SLI{}}\label{sec:hypersl_in_sli}

In this section, we consider the reverse problem and encode a \HyperSL{} MC instance into an equivalent \SLI{} MC instance.  
Let $\calG = (S, s_0, \moves, \kappa, L)$ be a fixed CGS and $\varphi$ be a fixed \HyperSL{} formula. 
The idea of our translation is to model a hyperproperty on a self-composition of the game structure \cite{BartheDR11}.
That is, we construct a new game structure that simulates multiple copies of $\calG$ in parallel. 
Intuitively, each copy will correspond to one of the paths we construct in the \HyperSL{} formula.
We then ensure that an agent does not observe the state of other copies by using the observation available in \SLI{}.
We assume, w.l.o.g., that the set of path variables $\pathVars$ is finite.
Similar to \Cref{lem:sli_il_ar}, we make use of the fact that we can assume w.l.o.g. that the game structure is IL and AR:

\begin{restatable}{lemma}{hyperslILandAR}
	Given a \HyperSL{} MC instance $(\calG, \varphi)$, there exists an effectively computable MC instance $(\calG', \varphi')$ where \textbf{(1)} $\calG'$ is IL and AR, and \textbf{(2)} $\calG \models \varphi$ iff $\calG' \models \varphi'$.
\end{restatable}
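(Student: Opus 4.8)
The plan is to obtain $\calG'$ from $\calG$ by two independent modifications and to obtain $\varphi'$ by a guarded re-quantification. First I would make $\calG$ injectively labeled by adding $\lceil \log_2 |S|\rceil$ fresh atomic propositions that encode the identity of each state; since these APs do not occur in $\varphi$, this step preserves $\calG \models \varphi$. Next I would make the structure action recording by taking the product of $\calG$ with an action-memory component: the states of $\calG'$ are $S \times \big((\agents \to \moves) \cup \{\iota\}\big)$, the transition is $\kappa'\big((s,\cdot),\bA\big) = \big(\kappa(s,\bA),\bA\big)$, the initial memory is $\iota$, and for each $\agent{i}$ and $\alpha$ the fresh AP $\langle \agent{i},\alpha\rangle$ simply reads off the recorded profile. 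Let $\rho$ be the projection that forgets the memory component; it is surjective on plays and commutes with the transition structure, so the lift $f \mapsto f \circ \rho$ maps every $\calG$-strategy to a memory-oblivious $\calG'$-strategy and preserves, for every profile $\bX_k$, the original-AP trace of the generated path.

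The hard part is that the naive choice $\varphi' = \varphi$ is \emph{wrong}. Because $\calG'$ records the last move of \emph{every} agent --- including moves that did not influence the original successor state --- a $\calG'$-strategy can branch on information that a $\calG$-strategy cannot observe. Formally, $\rho$ is not injective on plays, so ranging over all of $\strats{\calG'}$ is strictly more permissive than ranging over $\strats{\calG}$, and under quantifier alternation this changes the truth value: a formula such as $\exists x \ldot \forall y \ldot (\ldots)$ can hold in $\calG'$ (with $x$ copying $y$'s recorded move) while failing in $\calG$. This is exactly the point where the argument diverges from \Cref{lem:sli_il_ar}: in the \SLI{} setting one lifts the observation family so that the memory component is \emph{unobservable}, forcing every $o$-strategy to ignore it, whereas \HyperSL{} quantifies under full information and offers no such built-in restriction.

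To repair this I would restrict quantification to the $\rho$-invariant strategies, i.e.\ those depending only on the original-state history; these are in bijection with $\strats{\calG}$ since every $\rho$-invariant $g$ factors uniquely as $f \circ \rho$. The key observation is that $\rho$-invariance is itself expressible in \HyperSL{} by reusing the mechanism behind $\mathit{ii}_o(x)$: I define a guard $\mathit{ri}(x)$ that, for each agent slot, compares two paths on which $x$ is played and asserts that as long as both paths carry identical labelings of all non-action-recording APs at every step --- hence, by injective labeling, visit the same original state --- the action selected by $x$ (read from the action-recording APs) agrees on both. Concretely, $\mathit{ri}(x)$ is the formula $\mathit{ii}_o(x)$ with $\mathit{ind}_o$ replaced by ``agreement on all non-action-recording APs''. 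I would then set $\varphi'$ to be $\varphi$ with every $\exists x \ldot \phi$ rewritten to $\exists x \ldot \mathit{ri}(x) \land \phi$ and every $\forall x \ldot \phi$ to $\forall x \ldot \mathit{ri}(x) \rightarrow \phi$ (eliminating the implication as in the remark following the \SLI{}-to-\HyperSL{} translation), leaving the path-formula blocks $\psi[\pi_k : \bX_k]_{k=1}^m$ over the original APs untouched.

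Correctness then follows from the bijection $f \mapsto f\circ\rho$ between $\strats{\calG}$ and the $\rho$-invariant strategies of $\calG'$: a $\rho$-invariant assignment generates, for each profile $\bX_k$, a $\calG'$-play whose original-AP trace coincides with that of the corresponding $\calG$-play, so each path formula evaluates identically, and $\mathit{ri}(x)$ holds of precisely the $\rho$-invariant strategies in all reachable situations (the off-play values being irrelevant by the same ``only queried on compatible plays'' argument used for the \SLI{}-to-\HyperSL{} encoding). Since $\calG'$ is finite, IL, and AR by construction and the rewriting is effective, this yields the desired equivalent instance. I expect the two genuinely load-bearing steps to be establishing that $\mathit{ri}(x)$ captures $\rho$-invariance exactly on reachable plays and that the strategy bijection commutes with all quantifiers, mirroring the proof of the \SLI{}-to-\HyperSL{} theorem.
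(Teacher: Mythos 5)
Your proposal is correct and follows essentially the same route as the paper: reuse the action-recording product construction from \Cref{lem:sli_il_ar}, observe that \HyperSL{}'s full-information quantifiers would otherwise let strategies exploit the added memory component, and repair this by guarding each quantifier with an $\mathit{ii}_o$-style formula that forces strategies to agree on prefixes differing only in the recorded actions. Your write-up is in fact more explicit than the paper's brief sketch about why the naive $\varphi'=\varphi$ fails and about the bijection between $\calG$-strategies and the guarded ($\rho$-invariant) $\calG'$-strategies, but the underlying argument is the same.
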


\paragraph{Self-Composition}

We first formally define the self-composition of a game structure.
In our construction, we identify each copy in the self-composition by a path variable. 
States in the composition are, therefore, functions $\pathVars \to S$ that assign a state to each path variable.

\begin{definition}
	Define $\calG_\pathVars := (\pathVars \to S, \prod_{\pi \in \pathVars} s_0, \moves, \kappa', L')$ as the CGS over atomic propositions $\ap_\pathVars := \{ \cparai{a}{\pi} \mid a \in \ap, \pi \in \pathVars \}$ and agents $\agents_\pathVars := \{ \clanglei{\agent{i}}{\pi} \mid \agent{i} \in \agents, \pi \in \pathVars \}$ where $\kappa' : (\pathVars \to S) \times (\agents_\pathVars \to \moves) \to (\pathVars \to S)$ is defined as follows (where $\bA : \agents_\pathVars \to \moves$):
	\begin{align*}
		\kappa'\Big(\prod_{\pi \in \pathVars} s_\pi, \bA\Big) := \prod_{\pi \in \pathVars} \kappa\Big(s_\pi, \prod_{\agent{i} \in \agents} \bA (\clanglei{\agent{i}}{\pi}) \Big)
	\end{align*}
	and	$L'\big(\prod_{\pi \in \pathVars} s_\pi\big) :=\bigcup_{\pi \in \pathVars} \big\{ \cparai{a}{\pi} \mid  a \in L(s_\pi) \big\}$.
\end{definition}

In $\calG_\pathVars$, states are functions $\pathVars \to S$ (recall that $\pathVars$ and $S$ are both finite), and the initial state is the function $\prod_{\pi \in \pathVars} s_0$, mapping all path variables to the initial state $s_0$. 
Each state in $\calG_\pathVars$ records a state for every $\pi \in \pathVars$ (which we refer to as the $\pi$-copy), so each path in $\calG_\pathVars$ defines paths in $\calG$ for all path variables.
Agents in $\calG_\pathVars$ are of the form $\clanglei{\agent{i}}{\pi} \in \agents_\pathVars$, and APs are of the form $\cparai{a}{\pi} \in \ap_\pathVars$, i.e., they are indexed by path variables. 
Intuitively, for each $\pi \in \pathVars$, the agents $\{ \clanglei{\agent{i}}{\pi} \mid \agent{i} \in \agents\}$ are responsible for updating the system copy for $\pi$, so their strategies define the $\pi$-path.  
The transition function thus takes a state $\prod_{\pi \in \pathVars} s_\pi$ (where, for all $\pi \in \pathVars$, $s_\pi$ is the current state of the $\pi$-copy) and an action vector $\bA : \agents_\pathVars \to \moves$, and returns a new state $\prod_{\pi \in \pathVars} s'_\pi$.
Here, each $s'_\pi$ is defined as $\kappa\big(s_\pi, \prod_{\agent{i} \in \agents} \bA(\clanglei{\agent{i}}{\pi}) \big)$, i.e., we use $\calG$'s transition function $\kappa$ and construct the action profile $\agents \to \moves$ by mapping each agent $\agent{i} \in \agents$ (in $\calG$) to the action chosen by agent $\clanglei{\agent{i}}{\pi}$ (in $\calG_\pathVars$). 

We turn $\calG_\pathVars$ into a game structure under imperfect information by setting $\obs = \{o_\pi \mid \pi \in \pathVars\}$ and defining 
\begin{align*}
	\sim_{o_\pi} := \big\{ \big( \prod_{\pi \in \pathVars} s_\pi, \prod_{\pi \in \pathVars} s'_\pi  \big) \mid s_\pi = s'_\pi \big\}.
\end{align*}
Two states $ \prod_{\pi \in \pathVars} s_\pi$ and $\prod_{\pi \in \pathVars} s'_\pi$ thus appear indistinguishable under observation $o_\pi$ if the state of the $\pi$-copy agrees. 
A strategy under observation $o_\pi$ can thus only observe the $\pi$-copy and not base its decision on the other system copies. 

\paragraph{Translating Path Formulas}

We can now translate the \HyperSL{} formula into an equivalent \SLI{} formula over the self-composed game structure. 
The idea of our construction is to interpret path $\pi$ in the \HyperSL{} formula as the path traversed by the $\pi$-copy in $\calG_\pathVars$. 
In HyperSL path formulas, we, therefore, replace every path-variable-indexed atomic formula $a_{\pi}$ (where $a \in \ap$) with the atomic proposition $\cparai{a}{\pi} \in \ap_\pathVars$ in $\calG_\pathVars$:
\begin{align*}
	\transi{a_{\pi}} &:= \cparai{a}{\pi} \span\\
	\transi{\neg \psi} &:= \neg \transi{\psi} \\
	\transi{\psi_1 \land \psi_2} &:= \transi{\psi_1} \land \transi{\psi_2}    \\
	 \transi{\ltlN \psi} &:= \ltlN \transi{\psi}  \\
	 \transi{\psi_1 \ltlU \psi_2} &:= \transi{\psi_1} \ltlU \transi{\psi_2}
\end{align*}

\paragraph{Strategy Equality}

For state formulas, we need some additional gadget. 
In the \HyperSL{} formula, each strategy variable $x$ can be used on multiple paths. 
Yet, in $\calG_\pathVars$, each strategy only acts in a fixed copy (determined by its observation).
We, therefore, translate the quantification over a strategy variable $x$ in the \HyperSL{} formula into quantification over $|\pathVars|$-many variables $\{x_\pi \mid \pi \in \pathVars\}$ in \SLI{}. 
Each strategy variable $x_\pi$ will act in the $\pi$-copy of $\calG_\pathVars$ and thus be assigned observation $o_\pi$.
Whenever strategy variable $x$ is used in the \HyperSL{} formula to construct path $\pi \in \pathVars$, our translation uses strategy variable $x_\pi$. 
Consequently, we need to ensure that the strategies bound to $\{x_\pi \mid \pi \in \pathVars\}$ (who all will act in different copies of $\calG_\pathVars$) denote the same strategy, i.e., respond with the same action to the same prefix (in different copies of $\calG_\pathVars$).
For $x \in \stratVars$ and $\pi, \pi' \in \pathVars$, we define formula $\mathit{eq}(x_\pi, x_{\pi'})$ as follows:
\begin{align*}
	&\bigforall_{\agent{j} \in \agents, \pi \in \pathVars} y_{\agent{j},\pi}^{o_\pi}\ldot  \bigdot_{\agent{j} \in \agents, \pi \in \pathVars} \!\!\!\!\!\!\!\! \bindi{\clanglei{\agent{j}}{\pi}}{y_{\agent{j}, \pi}} \\
	&\quad\quad\quad\bigwedge_{\agent{i}_1, \agent{i}_2 \in \agents} \bindi{\clanglei{\agent{i}_1}{\pi}}{x_\pi}  \bindi{\clanglei{\agent{i}_2}{\pi'}}{x_{\pi'}} \; \psi_{\agent{i}_1, \agent{i}_2}^\mathit{eq},
\end{align*}
where $\psi_{\agent{i}_1, \agent{i}_2}^\mathit{eq}$ is defined as follows:
\begin{align*}
	\psi_{\agent{i}_1, \agent{i}_2}^\mathit{eq}:= &\Big(\ltlN \bigwedge_{\alpha \in \moves} \cparai{\langle \agent{i}_1, \alpha \rangle}{\pi} \leftrightarrow \cparai{\langle \agent{i}_2, \alpha \rangle}{\pi'} \Big) \\
	&\quad\quad\quad\quad\ltlW \Big( \bigvee_{a \in \ap} \cparai{a}{\pi} \not\leftrightarrow \cparai{a}{\pi'}  \Big).
\end{align*}
Here, we write $\!\bigdot\!$ as an abbreviation for the concatenation of multiple agent bindings in \SLI{}.

Formula $\mathit{eq}(x_\pi, x_{\pi'})$ expresses that $x_\pi$ and $x_{\pi'}$ denote the same strategy, even though $x_\pi$ operates in the $\pi$-copy and $x_{\pi'}$ operates in the $\pi'$-copy.
The underlying idea is similar to the one used in \Cref{sec:enc1}: we use the fact that $\calG$ is IL and AR and can thus reason about the action selection of agents within our logic. 
In $\mathit{eq}(x_\pi, x_{\pi'})$, we consider every agent $\agent{j} \in \agents$ and every path variable $\pi \in \pathVars$, quantify universally over a strategy $y_{\agent{j}, \pi}$ with observation $o_\pi$, and then bind agent $\clanglei{\agent{j}}{\pi}$ in $\calG_\pathVars$ to strategy $y_{\agent{j}, \pi}$.
We then consider any pair of agents $\agent{i}_1, \agent{i}_2 \in \agents$, and want to express that if $\agent{i}_1$ plays $x_\pi$ in the $\pi$-copy (so agent $\clanglei{\agent{i}_1}{\pi}$ plays $x_\pi$) and $\agent{i}_2$ plays $x_{\pi'}$ in the $\pi'$-copy (so agent $\clanglei{\agent{i}_2}{\pi'}$ plays $x_{\pi'}$), the same prefix (in the $\pi$ and $\pi'$ copies) results in the same action selected by $\clanglei{\agent{i}_1}{\pi}$ and $\clanglei{\agent{i}_2}{\pi'}$.
We thus re-bind agents $\clanglei{\agent{i}_1}{\pi}$ and $\clanglei{\agent{i}_2}{\pi'}$ to strategies $x_\pi$ and $x_{\pi'}$, respectively. 
The resulting strategy profile now explores all possible reachable situations in which agent $\agent{i}_1$ in the $\pi$-copy plays strategy $x_\pi$, and agent $\agent{i}_2$ in the $\pi'$-copy plays strategy $x_{\pi'}$.
The path formula $\psi_{\agent{i}_1, \agent{i}_2}^\mathit{eq}$ then asserts that as long as the prefix in the $\pi$ and $\pi'$ copy is equal (using the fact that $\calG$ is IL), the action selected by $\clanglei{\agent{i}_1}{\pi}$ and $\clanglei{\agent{i}_2}{\pi'}$ is the same (using the fact that $\calG$ is AR). 
Note that we assumed that $\calG$ is AR and thus includes APs of the form $\langle \agent{i}, \alpha \rangle$ for each agent $\agent{i}$ and action $\alpha$, so $\calG_\pathVars$ includes APs of the form $\cparai{\langle \agent{i}, \alpha \rangle}{\pi}$.

\paragraph{Translating State Formulas}

Using the $\mathit{eq}(\cdot, \cdot)$ construction, we can translate \HyperSL{} state formulas:
\begin{align*}
	\transi{\forall x\ldot \varphi} &:= \bigforall_{\pi \in \pathVars} x_\pi^{o_\pi}\ldot \Big(\bigwedge_{\pi, \pi' \in \pathVars} \mathit{eq}(x_\pi, x_{\pi'}) \Big)\to \transi{\varphi}\\ 
	\transi{\exists x\ldot \varphi} &:= \bigexists_{\pi \in \pathVars} x_\pi^{o_\pi}\ldot  \Big(\bigwedge_{\pi, \pi' \in \pathVars} \mathit{eq}(x_\pi, x_{\pi'})\Big) \land \transi{\varphi}\\ 
	\transi{\varphi_1 \land \varphi_2} &:= \transi{\varphi_1}\land \transi{\varphi_2} \\
	\transi{\varphi_1 \lor \varphi_2} &:= \transi{\varphi_1}  \lor  \transi{\varphi_2}\\
	\transi{\psi\big[\pi_k : \bX_k \big]_{k =1}^m} &:=\Big(\!\!\!\! \bigdot_{\agent{i} \in \agents, k=1}^m \!\!\!\!  \bindi{\clanglei{\agent{i}}{\pi_k}}{(\bX_k(\agent{i}))_{\pi_k}} \Big) \; \transi{\psi}
\end{align*}
Here, we again write $\!\bigdot\!$ as an abbreviation for the concatenation of multiple agent bindings in \SLI{}.
Whenever we translate quantification over variable $x$, we instead quantify over $|\pathVars|$-many variables $\{x_\pi \mid \pi \in \pathVars\}$  with the appropriate observation and make sure that they all denote the same strategy by using $\mathit{eq}(\cdot, \cdot)$.
When we translate $\psi\big[\pi_k : \bX_k \big]_{k=1}^m$, we fix strategies for all agents $\cparai{\agent{i}}{\pi} \in \agents_\pathVars$, and evaluate the translated path formula $\transi{\psi}$.
The key idea here is to bind each agent $\cparai{\agent{i}}{\pi_k}$ to the strategy that corresponds to the strategy $\bX_k(\agent{i})$ in the \HyperSL{} formula, i.e., the strategy that takes the role of agent $\agent{i}$ in the $\pi_k$-copy of $\calG_\pathVars$ (i.e., the strategy of $\cparai{\agent{i}}{\pi_k}$) should equal the strategy that agent $\agent{i}$ uses to construct path $\pi_k$ in the \HyperSL{} formula.
In our encoding, we translate each strategy quantifier over $x$ in \HyperSL{}, into $|\pathVars|$-many strategies $\{x_\pi \mid \pi \in \pathVars\}$ in \SLI{}, where each $x_\pi$ acts in the $\pi$-copy of $\calG_\pathVars$. 
Consequently, the strategy variable that corresponds to $\bX_k(\agent{i})$ in the $\pi_k$-copy is the variable $(\bX_k(\agent{i}))_{\pi_k}$.
For example, if $\bX_k(\agent{i}) = z$, we bind agent $\clanglei{\agent{i}}{\pi_k}$ to strategy variable $z_{\pi_k}$.

\begin{example}
	Consider the \HyperSL{} formula 
	\begin{align*}
		\exists x, y\ldot \forall z. ( \ltlG (a_{\pi_1} \to b_{\pi_2}) )[\pi_1 : (x, y), \pi_2 : (z, x)]
	\end{align*}
	Using our translation, and after removing non-needed variables, we obtain the following formula on $\calG_{\{\pi_1, \pi_2\}}$:
		\begin{align*}
			&\exists x_{\pi_1}^{o_{\pi_1}}, x_{\pi_2}^{o_{\pi_2}}, y_{\pi_1}^{o_{\pi_1}}. \forall z_{\pi_2}^{o_{\pi_2}}\ldot \mathit{eq}(x_{\pi_1}, x_{\pi_2}) \, \land \\
			&\quad \bindi{\clanglei{\agent{1}}{\pi_1}}{x_{\pi_1}} \bindi{\clanglei{\agent{2}}{\pi_1}}{y_{\pi_1}}  \bindi{\clanglei{\agent{1}}{\pi_2}}{z_{\pi_2}} \bindi{\clanglei{\agent{2}}{\pi_2}}{x_{\pi_2}} \\
			&\quad\quad\ltlG (\cparai{a}{\pi_1} \to \cparai{b}{\pi_2}).
		\end{align*}
\end{example}

\begin{restatable}{theorem}{hyperslToSLI}
	Let $(\calG, \varphi)$ be a \HyperSL{} instance, where $\calG$ is IL and AR. 
	Then $\calG \models \varphi$ iff $(\calG_{\pathVars}, \{\sim_{o_\pi}\}_{o_\pi \in \obs}) \models_{\text{\SLI{}}} \transi{\varphi}$.
\end{restatable}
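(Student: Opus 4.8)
The plan is to prove the equivalence by structural induction on the \HyperSL{} formula $\varphi$, carrying an invariant that links a \HyperSL{} strategy assignment $\Delta : \stratVars \rightharpoonup \strats{\calG}$ with a corresponding \SLI{} assignment on $\calG_\pathVars$ that, for every variable $x$ and every $\pi \in \pathVars$, maps the copy $x_\pi$ to an $o_\pi$-strategy whose $\pi$-projection equals $\Delta(x)$. The foundation is a correspondence between strategies: since $\sim_{o_\pi}$ identifies two composite states exactly when their $\pi$-components agree, two plays in $\calG_\pathVars$ are $o_\pi$-indistinguishable iff their $\pi$-projections coincide. Hence every $o_\pi$-strategy $g$ factors as $g = f \circ \mathrm{proj}_\pi$ for some $f \in \strats{\calG}$, and conversely every $f \in \strats{\calG}$ lifts to an $o_\pi$-strategy. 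This (reachable-behaviour) bijection is precisely what lets a single full-information strategy of \HyperSL{} be simulated by the family $\{x_\pi\}_{\pi \in \pathVars}$ in \SLI{}.

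Second, I would establish correctness of the $\mathit{eq}(\cdot,\cdot)$ gadget: if $x_\pi, x_{\pi'}$ are bound to $o_\pi$- and $o_{\pi'}$-strategies projecting to $f, f'$, then $\mathit{eq}(x_\pi, x_{\pi'})$ holds iff $f$ and $f'$ agree on all reachable prefixes. The universal quantification over the $y_{\agent{j},\pi}$ together with the agent re-bindings of $\clanglei{\agent{i}_1}{\pi}$ and $\clanglei{\agent{i}_2}{\pi'}$ drives the $\pi$- and $\pi'$-copies along an arbitrary pair of reachable plays; whenever the two copies trace the same state sequence (detected by equal labels, using IL), the body $\psi^{\mathit{eq}}_{\agent{i}_1,\agent{i}_2}$ forces the next recorded actions to match (using AR), and the weak until $\ltlW$ keeps this constraint active exactly until the prefixes diverge. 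Iterating over all $\agent{i}_1, \agent{i}_2$ and all $y$-choices captures ``same action on every common reachable prefix'', i.e.\ $f = f'$ in all reachable situations.

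Third, I would prove a path-formula lemma: given the agent binding produced by $\transi{\psi[\pi_k : \bX_k]_{k=1}^m}$, the unique play it induces in $\calG_\pathVars$ projects, for each $k$, to exactly the \HyperSL{} path obtained from the profile $\bX_k$ under $\Delta$; this is immediate from the product shape of $\kappa'$. Since $L'$ places $\cparai{a}{\pi}$ precisely when $a$ holds on the $\pi$-projection, a routine LTL induction yields that this play satisfies $\transi{\psi}$ iff the $m$-tuple of projected paths satisfies $\psi$ in \HyperSL{}. With these three pieces the main induction is short: at $\exists x$, a \HyperSL{} witness $f$ yields the lifts $\{f \circ \mathrm{proj}_\pi\}_\pi$, which satisfy every $\mathit{eq}(x_\pi,x_{\pi'})$ by the second step and preserve the invariant, while conversely any \SLI{} witnesses satisfying all $\mathit{eq}$-conjuncts project to a single $f$ serving as the \HyperSL{} witness; the $\forall$, $\land$, $\lor$ cases are dual or immediate, and the base case is the third step. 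Implications introduced by $\transi{\forall x.\varphi}$ are handled as in \Cref{sec:enc1}, rewriting $\to$ and pushing negations into path formulas.

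The main obstacle is the ``reachable situations'' subtlety shared by both the strategy correspondence and the $\mathit{eq}$ gadget: strategies need only agree on prefixes the play can actually reach, and I must argue this weaker agreement suffices because both the \HyperSL{} and \SLI{} semantics only ever query a strategy along plays consistent with the current profile. Making precise that the universal choices of the $y_{\agent{j},\pi}$ realize \emph{every} reachable common prefix (and only those) is the delicate combinatorial heart of the argument, and is where the IL and AR assumptions are essential.
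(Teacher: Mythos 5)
Your proposal is correct and takes essentially the same route as the paper: the factorization of $o_\pi$-strategies through the $\pi$-projection, the correctness of the $\mathit{eq}(\cdot,\cdot)$ gadget on reachable prefixes, the projection lemma for path formulas, and the closing observation that agreement on reachable, self-consistent prefixes suffices are exactly the ingredients the paper uses in its informal justification in \Cref{sec:hypersl_in_sli} (the appendix gives no further formal proof of this theorem). Your explicit flagging of the ``reachable situations'' subtlety is precisely the point the paper itself leans on.
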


\paragraph{Encoding Size}

We can, again, analyze the size of our encoding. 
The size of $\mathit{eq}(\cdot, \cdot)$ is of order
$$
\calO \big( |\agents| \cdot |\pathVars| + |\agents|^2 \cdot (|\moves| + |\ap|) \big).
$$
The translation of path formulas is linear, for each quantifier we add $|\pathVars|^2$-many $\mathit{eq}(\cdot, \cdot)$ constraints, and for each nested path formula we add $|\pathVars|\cdot|\agents|$-many agent bindings. 
The overall size of $\transi{\varphi}$ is thus bounded by
$$
\calO \Big( |\varphi| \cdot |\pathVars|^3 \cdot |\agents|^2 \cdot (|\moves| + |\ap|) \Big).
$$
As expected for a self-composition, $\calG_\pathVars$ has $\calO\big( |S|^{|\pathVars|} \big)$ states \cite{BartheDR11}.  

\section{Related Work}

There exist many extensions of ATL$^*$ and SL to reason about imperfect information \cite{BelardinelliLM19,DimaT11,JamrogaMM19,abs-1908-02488,BelardinelliLMR17,BerthonMM17,HuangM14,HuangM18} or hyperproperties \cite{BeutnerF21,BeutnerF23LMCS,BeutnerF24,BeutnerF24SL}.
The connection between knowledge and hyperproperties has been studied extensively \cite{Rabe16,coenen2020hierarchy,BeutnerFFM23,BeutnerF25a,BeutnerF25b}.
The work most closely related to ours is the study of \cite{BozzelliMP15}, who show that LTL$_K$ and a fragment of HyperCTL* are equally expressive (on standard Kripke structures).
We study the relation of hyperproperties and knowledge in the setting of MASs and strategies.
For example, unlike LTL$_K$, \SLI{} does not use a knowledge operator that we can directly encode. Instead, \SLI{} quantifies over strategies with a given observation relation.

\section{Conclusion}

In this paper, we have established the first formal connection between imperfect information and hyperproperties in the context of strategy logic. 
Our results shed new light on the intricate connection between knowledge and hyperproperties and allow for the transfer of tools and techniques.
For future work, it is interesting to check how existing verification tools for imperfect information \cite{JamrogaKKM19,LomuscioQR09,KurpiewskiJK19} perform on the encoding from \HyperSL{}, and if the decidable fragments of \HyperSL{} correspond to interesting classes of \SLI{} properties.

\section*{Acknowledgments}
This work was supported by the European Research Council (ERC) Grant HYPER (101055412), and by the German Research Foundation (DFG) as part of TRR 248 (389792660).

\bibliographystyle{kr}
\bibliography{references}

\iffullversion

\appendix

\section{Notation}

Given a set $X$, we write $X^+$ (resp.~$X^\omega$) for the set of non-empty finite (resp.~ infinite) sequences over $X$.
For $u \in X^\omega$ and $j \in \nat$, we write $x(j)$ for the $i$th element, $u[0,j]$ for the finite prefix up to position $j$ (of length $j + 1$), and $u[j, \infty]$ for the infinite suffix starting at position $j$.

\section{CGSs}

Recall that a CGS is a tuple $\calG = (S, s_0, \moves, \kappa, L)$ where $S$ is a finite set of states, $s_0 \in S$ is an initial state, $\moves$ is a finite set of actions, $\kappa : S \times (\agents \to \moves)\to S$ is a transition function, and $L : S \to 2^\ap$ is a labeling function.
The transition function takes a state $s$ and an \emph{action profile} $\vec{\alpha} : \agents \to \moves$ (mapping each agent an action) and returns a unique successor state $\kappa(s, \vec{\alpha})$.
A strategy in $\calG$ is a function $f : S^+ \to \moves$, mapping finite plays to actions.
We denote the set of all strategies in $\calG$ with $\strats{\calG}$.
A \emph{strategy profile} $\prod_{\agent{i} \in \agents} f_\agent{i}$ assigns each agent $\agent{i}$ a strategy $f_\agent{i} \in \strats{\calG}$.
Given strategy profile $\prod_{\agent{i} \in \agents} f_\agent{i}$ and state $s \in S$, we can define the unique path $\play_\calG(s, \prod_{\agent{i} \in \agents} f_\agent{i}) \in S^\omega$ resulting from the interaction between the agents.
We define $\play_\calG(s, \prod_{\agent{i} \in \agents} f_\agent{i})$ as the unique path $p \in S^\omega$ such that $p(0) = s$ and for every $j \in \nat$ we have 
\begin{align*}
	p(j+1) = \kappa\big(p(j), \prod_{\agent{i} \in \agents} f_\agent{i}(p[0,j]) \big).
\end{align*}
That is, in every step, we construct the action profile $\prod_{\agent{i} \in \agents} f_\agent{i}(p[0,j])$ in which each agent $\agent{i}$ plays the action determined by $f_\agent{i}$ on the current prefix $p[0,j]$.

\section{\SLI{} Semantics}

Assume $\calG = (S, s_0, \moves, \kappa, L)$ is a fixed CGS.
Given a path $p \in S^\omega$ we define the semantics of path formulas as expected:
\begin{align*}
	p &\models a &\text{iff } \quad&a \in L(p(0))\\
	p&\models \psiSL_1 \land \psiSL_2 &\text{iff } \quad &p \models \psiSL_1 \text{ and } p \models \psiSL_2\\
	p &\models \neg \psiSL &\text{iff } \quad &p \not\models \psiSL\\
	p &\models \ltlN \psiSL &\text{iff } \quad &p[1, \infty] \models \psiSL\\
	p &\models \psiSL_1 \ltlU \psiSL_2 &\text{iff } \quad &\exists j \in \nat \ldot p[j, \infty] \models \psiSL_2 \text{ and } \\
	&  \quad\quad\quad\quad\quad\quad\quad\quad\quad\quad \forall 0 \leq k < j\ldot p[k, \infty] \models \psiSL_1 \span \span
\end{align*}
In the semantics of state formulas, we keep track of a strategy for each strategy variable via a (partial) strategy assignment $\Delta : \stratVars \rightharpoonup \strats{\calG}$.
We write $\{\}$ for the unique strategy assignment with an empty domain.
As \SLI{} works with explicit agent bindings, we also keep track of a strategy for each agent using a (partial) function $\Theta : \agents \rightharpoonup \strats{\calG}$. 
We can then define:
\begin{align*}
	&s, \Delta, \Theta \models \forall x^o \ldot\varphiSL &\text{iff} \quad &\forall f \in\strats{\calG, o} \ldot \\
	& &s, \Delta[x \mapsto f], \Theta \models \varphiSL \span \span\\
	&s, \Delta, \Theta \models \exists x^o \ldot\varphiSL &\text{iff} \quad &\exists f \in\strats{\calG, o} \ldot \\
	& &s, \Delta[x \mapsto f], \Theta \models \varphiSL \span\span\\
	&s, \Delta, \Theta \models \bindi{\agent{i}}{x}\varphiSL &\text{iff} \quad &s, \Delta, \Theta[\agent{i} \mapsto \Delta(x)] \models \varphiSL\\
	&s, \Delta, \Theta \models \varphiSL_1 \land  \varphiSL_2 &\text{iff} \quad &s, \Delta, \Theta \models \varphiSL_1 \text{ and } s, \Delta, \Theta \models \varphiSL_2\\
	&s, \Delta, \Theta \models \varphiSL_1 \lor  \varphiSL_2 &\text{iff} \quad &s, \Delta, \Theta \models \varphiSL_1 \text{ or } s, \Delta, \Theta \models \varphiSL_2\\
	&s, \Delta, \Theta \models \psiSL &\text{iff} \quad &\play\big(s, \prod_{\agent{i} \in \agents} \Theta (\agent{i}) \big) \models \psiSL
\end{align*}
Strategy quantification updates the binding in $\Delta$, whereas strategy binding updates the assignment of agents in $\Theta$. 
For each path formula we use the strategy profile $\Theta$ (mapping each agent to a strategy) to construct the path on which we evaluate $\psi$.
Given a game structure $\calG$, a family $\{\sim_o\}_{o \in \obs}$, and an \SLI{} formula $\varphiSLI$, we write $(\calG, \{\sim_o\}_{o \in \obs}) \models_{\text{\SLI}} \varphiSLI$ if $s_0, \{\}, \{\} \models \varphiSLI$ in the \SLI{} semantics. 
See \cite{BerthonMMRV17,BerthonMMRV21} for concrete examples of \SLI{}.

\section{\HyperSL{} Semantics}

We fix a game structure $\calG = (S, s_0, \moves, \kappa, L)$.
A \emph{strategy assignment} is a partial mapping $\Delta : \stratVars \rightharpoonup \strats{\calG}$.
We write $\{\}$ for the unique strategy assignment with an empty domain.
In \HyperSL{}, a path formula $\psi$ refers to propositions on multiple path variables. 
We evaluate it in the context of a \emph{path assignment} $\Pi : \pathVars \rightharpoonup S^\omega$ mapping path variables to paths, similar to the semantics of \HyperCTLS{} \cite{ClarksonFKMRS14}.
Given $j \in \nat$, we define $\Pi[j, \infty]$ as the shifted assignment defined by $\Pi[j, \infty](\pi) := \Pi(\pi)[j, \infty]$.
For a path formula $\psi$, we then define the semantics in the context of path assignment $\Pi$:
\begin{align*}
	\Pi &\models_\calG a_\pi &\text{iff } \quad&a \in L\big(\Pi(\pi)(0)\big)\\
	\Pi &\models_\calG \psi_1 \land \psi_2 &\text{iff } \quad &\Pi \models_\calG \psi_1 \text{ and } \Pi \models_\calG \psi_2\\
	\Pi &\models_\calG \neg \psi &\text{iff } \quad &\Pi \not\models_\calG \psi\\
	\Pi &\models_\calG \ltlN \psi &\text{iff } \quad &\Pi[1, \infty] \models_\calG \psi\\
	\Pi &\models_\calG \psi_1 \ltlU \psi_2 &\text{iff } \quad &\exists j \in \nat\ldot \Pi[j, \infty] \models_\calG \psi_2 \text{ and } \\
	&  \quad\quad\quad\quad\quad\quad\quad\quad\quad\quad \forall 0 \leq k < j\ldot \Pi[k, \infty] \models_\calG \psi_1 \span \span
\end{align*}
The semantics for path formulas synchronously steps through all paths in $\Pi$ and evaluate $a_\pi$ on the path bound to $\pi$.
That is $a_\pi$ holds if the current state of the path assigned to $\pi$ (i.e., $\Pi(\pi)(0)$) satisfies $a$, i.e., $a \in L\big(\Pi(\pi)(0)\big)$.
State formulas are evaluated in a state $s \in S$ and strategy assignment $\Delta$ as follows:
\begin{align*}
	&s, \Delta \models_\calG \forall x \ldot\varphi  &&\text{iff }\\
	& \quad\quad\quad \quad\quad\forall f \in\strats{\calG} \ldot s, \Delta[x \mapsto f] \models_\calG \varphi \span\span\\
	&s, \Delta \models_\calG \exists x \ldot\varphi  &&\text{iff }\\
	 & \quad\quad\quad \quad\quad\exists f \in\strats{\calG} \ldot s, \Delta[x \mapsto f] \models_\calG \varphi \span\span\\
	 &s, \Delta \models \varphi_1 \land \varphiSL_2 &&\text{iff } s, \Delta \models \varphi_1 \text{ and } s, \Delta \models \varphi_2 \\
	 &s, \Delta \models \varphi_1 \lor \varphiSL_2 &&\text{iff } s, \Delta \models \varphi_1 \text{ or } s, \Delta \models \varphi_2 \\
	&s, \Delta \models_\calG \psi\big[\pi_k : \bX_k \big]_{k=1}^m &&\text{iff } \\
	& \quad\quad\quad \quad\quad\Big[\pi_k\! \mapsto \! \play_\calG\Big(s,\! \prod_{\agent{i} \in \agents} \!\!\Delta(\bX_k(\agent{i})) \Big) \Big]_{k=1}^m \models_\calG \psi \span \span
\end{align*}
To resolve a formula $\psi\big[\pi_k : \bX_k \big]_{k=1}^m$, we construct $m$ paths (bound to $\pi_1, \ldots, \pi_m$), and evaluate $\psi$ in the resulting path assignment. 
The $k$th path (bound to $\pi_k$) is the play where each agent $\agent{i}$ plays strategy $\Delta(\bX_k(\agent{i}))$, i.e., the strategy currently bound to the strategy variable $\bX_k(\agent{i})$.
We write $\calG \models \varphi$ if $s_0, \{\} \models_\calG \varphi$, i.e., the initial state satisfies state formula $\varphi$.

\section{IL and MR}

Recall the definition of IL and AR:

\ilar*

\makeILAR*
\begin{proof}
	Assume $\calG = (S, s_0, \moves, \kappa, L)$.
	Define $\ap' := \ap \uplus \{\langle \agent{i}, a \rangle  \mid \agent{i} \in \agents, \alpha \in \moves\}$.
	We then define
	\begin{align*}
		\textstyle\calG' = (S \times (\agents \to \moves), (s_0, \prod_{\agent{i} \in \agents} a), \moves, \kappa', L')
	\end{align*}
	where $\alpha \in \moves$ is some arbitrary action (in the initial state, we do not need to track the last played action). 
	For an action profile $\prod_{\agent{i} \in \agents} \alpha_\agent{i}$, we define $\kappa'$ and $l'$ by
	\begin{align*}
		\kappa' \big((s, \_), \prod_{\agent{i} \in \agents} \alpha_\agent{i}\big) &:= (\kappa(s, \prod_{\agent{i} \in \agents} \alpha_\agent{i}), \prod_{\agent{i} \in \agents} \alpha_\agent{i})\\
		L'(s, \prod_{\agent{i} \in \agents} \alpha_\agent{i}) &:= L(s) \uplus \{ \langle \agent{i}, \alpha_\agent{i} \rangle \mid \agent{i} \in \agents\}.
	\end{align*}
	The idea behind $\calG'$ is that we record the action profile that was last used in the second component of each state. 
	In each transition, we ignore the action profile in the current step, and record the new action profile n the second component. 
	In the labeling function, we can then use the action profile $\prod_{\agent{i} \in \agents} \alpha_\agent{i}$ in the second component to set the APs $\langle \agent{i}, \alpha_\agent{i} \rangle$ for all $\agent{i} \in \agents$.
	We define $\{\sim'_o\}_{o \in \obs}$ by 
	\begin{align*}
		\sim'_o := \Big\{ \big((s, \_), (s', \_)\big) \mid s \sim_o s' \Big\}.
	\end{align*}
	That is, for any observation cannot distinguish states based on the second position. 
	In particular note that $(s, \_)$ and $(s, \_)$ are always indistinguishable, i.e., all states we expanded that we added are indistinguishable under the new observation. 
	
	It is easy to see that $\calG'$ is AR and that
	$(\calG, \{\sim_o\}_{o \in \obs}) \models_{\text{\SLI{}}} \varphiSLI$ iff $(\calG', \{\sim'_o\}_{o \in \obs}) \models_{\text{\SLI{}}} \varphiSLI$.
	Note that we did not change the formula.
	In a second step, we can ensure that $\calG'$ is also IL by simply adding sufficiently many new propositions. 
	As those new propositions are never used in $\varphiSL$ so the \SLI{} semantics is unchanged.
\end{proof}

\hyperslILandAR*
\begin{proof}
	We use the same game structure $\calG$ that we constructed in the proof of \Cref{lem:sli_il_ar}.
	Similar to the construction of $\mathit{ii}_o(x)$ in \Cref{sec:enc1} we can express that a strategy should behave the same if two prefixes that denote the same prefix in $\calG$ and only differ in the second component that records the move that was last played. 
	If we restrict each quantification to strategies that do this, we recover the same semantics as in $\calG$.
\end{proof}

\fi

\end{document}